\documentclass[12pt,reqno]{article}

\usepackage[usenames]{color}
\usepackage{amssymb}
\usepackage{amsmath}
\usepackage{amsthm}
\usepackage{amsfonts}
\usepackage{amscd}
\usepackage{graphicx}
\usepackage{diagbox}

\usepackage[colorlinks=true,
linkcolor=webgreen,
filecolor=webbrown,
citecolor=webgreen]{hyperref}

\definecolor{webgreen}{rgb}{0,.5,0}
\definecolor{webbrown}{rgb}{.6,0,0}

\usepackage{color}
\usepackage{fullpage}
\usepackage{float}

\usepackage{graphics}
\usepackage{latexsym}
\usepackage{epsf}
\usepackage{breakurl}

\DeclareMathOperator{\per}{per}

\begin{document}

\theoremstyle{plain}
\newtheorem{theorem}{Theorem}
\newtheorem{corollary}[theorem]{Corollary}
\newtheorem{lemma}[theorem]{Lemma}
\newtheorem{proposition}[theorem]{Proposition}

\theoremstyle{definition}
\newtheorem{definition}[theorem]{Definition}
\newtheorem{example}[theorem]{Example}
\newtheorem{conjecture}[theorem]{Conjecture}

\theoremstyle{remark}
\newtheorem{remark}[theorem]{Remark}

\title{Avoidance of split overlaps}

\author{Daniel Gabric and Jeffrey Shallit \\
School of Computer Science \\
University of Waterloo \\
Waterloo, ON  N2L 3G1 \\
Canada \\
{\tt dgabric@uwaterloo.ca} \\
{\tt shallit@uwaterloo.ca} 
}

\date{}

\maketitle

\begin{abstract}
We generalize Axel Thue's familiar definition of overlaps in words,
and show that there are no infinite words containing split
occurrences of these generalized overlaps.  Along the way we
prove a useful theorem about repeated disjoint occurrences in words --- an interesting natural variation on the classical de Bruijn sequences.
\end{abstract}

\section{Introduction}

In this paper,
we are concerned with words over a finite alphabet $\Sigma$
of cardinality $k \geq 1$; more specifically, avoiding certain
kinds of repetitions in them.

Two kinds of repetitions that have been studied for more than
a hundred years are squares and overlaps \cite{Thue:1912,Berstel:1995}.
A {\it square} is a finite nonempty word of the form $xx$ 
(such as the English word {\tt murmur}).
Another type of repetition is the $\alpha$-power.  We say a word $w$
is an $\alpha$-power, for $\alpha = p/q$, a rational number,
if $|w| = p$ and $w$ has period $q$.   (We say a word $w$ has period
$q \geq 1$ if $w[i] = w[i+q]$ for all $i$ for which this makes sense.)
Thus {\tt alfalfa} is a $(7/3)$-power.
A word $y$ is a {\it factor\/} of a word $w$ if
$w = xyz$ for words $x, z$ (possibly empty).  When we speak about a word
``avoiding $\alpha$-powers'', we mean it has no factor that is a
$\beta$-power, for all $\beta \geq \alpha$.  
The smallest period of a word $w$ is sometimes called {\it the\/} period,
and is written $\per(w)$.

An {\it overlap} is a finite word of the form
$axaxa$ for $a$ a single letter, and $x$ a (possibly) empty word, such
as the French word {\tt entente}.
An overlap can be viewed as just slightly more than a square: it
consists of two repetitions of a nonempty word $w$, followed by the first letter
of $w$.

The term ``overlap'' comes from the following ``folk'' observation:  say
two distinct occurrences of a length-$n$ factor $x$ in $w$, say
$x = w[i..i+n-1] = w[j..j+n-1]$ with $i < j$, ``overlap each other''
if $0 < j-i < n$.  
\begin{proposition}
If $w$ contains two distinct occurrences of $x$ that overlap each other,
then $w$ contains an overlap.
\label{prop1}
\end{proposition}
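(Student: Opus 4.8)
The plan is to extract a highly periodic factor from the two overlapping occurrences and exhibit an overlap inside it. I would set $p = j - i$, so that $1 \le p \le n-1$ by the hypothesis $0 < j-i < n$, and consider $v = w[i..j+n-1]$, the factor of $w$ spanning both occurrences of $x$; a quick index count gives $|v| = n+p$.

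The first key step is to show that $v$ has period $p$. Since $x = w[i..i+n-1] = w[j..j+n-1]$ and $j = i+p$, I would rewrite the defining equalities as $w[i+m] = w[i+p+m]$ for $0 \le m \le n-1$. Re-indexing via $v[t] = w[i+t]$, this becomes $v[t] = v[t+p]$ for $0 \le t \le n-1$; as $t+p$ then reaches the last index $n+p-1$ of $v$, this is exactly periodicity with period $p$.

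Next, since $n \ge p+1$ we have $|v| = n+p \ge 2p+1$, so I can take the length-$(2p+1)$ prefix $z = v[0..2p]$. Letting $u = v[0..p-1]$, a nonempty word of length $p$, periodicity gives $v[p..2p-1] = u$ as well, and $v[2p] = v[p] = v[0]$, the first letter of $u$. Writing $u = ay$ with $a = v[0]$ a single letter and $y$ a possibly empty word, I obtain $z = uua = ayaya$, which is precisely an overlap in the sense defined above. As $z$ is a factor of $v$, hence of $w$, the word $w$ contains an overlap.

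I expect no serious obstacle here. The only points needing care are the index bookkeeping that converts the overlapping-occurrence hypothesis into the period statement, and the observation that the definition of overlap permits the central word $y$ to be empty, so that the boundary case $p=1$ (where $u$ is a single letter and $z = aaa$) still counts as a genuine overlap.
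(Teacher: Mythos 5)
Your proof is correct, including the care you take with the boundary case $p=1$ (where the central word is empty and the overlap degenerates to $aaa$), but it takes a genuinely different route from the paper. The paper decomposes the span $w[i..j+n-1]$ as $ytz$ with $y = w[i..j-1]$, $t = w[j..i+n-1]$, $z = w[i+n..j+n-1]$, observes $x = yt = tz$, and invokes the Lyndon--Sch\"utzenberger theorem to conclude that the span equals $(uv)^{e+2}u$ with $u$ nonempty, a word that visibly contains an overlap. You bypass Lyndon--Sch\"utzenberger entirely: the equality of the two occurrences translates, by direct index bookkeeping, into the statement that $v = w[i..j+n-1]$ has period $p = j-i$, and since $|v| = n+p \geq 2p+1$, the prefix $v[0..2p] = uua = ayaya$ is itself an overlap. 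Your argument is more elementary and self-contained, and it pins down something slightly sharper locally: an overlap of period exactly $p$ starting at position $i$. What the paper's heavier machinery buys is structural information --- the entire span is exhibited as the power $(uv)^{e+2}u$, revealing the primitive-root structure of the repetition --- and, perhaps more to the point, the same decomposition $x = yt = tz$ combined with Lyndon--Sch\"utzenberger is reused almost verbatim in the proof of Lemma~\ref{lem1}, where that factorization is what identifies the minimal distance between overlapping occurrences of $x$ with $\per(x)$; so the paper's choice here doubles as a warm-up for the later argument. Both proofs ultimately produce an overlap whose period divides (in your case, equals) $j-i$.
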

\begin{proof}
Define $y = w[i..j-1]$, $t = w[j..i+n-1]$, and $z = w[i+n..j+n-1]$ and examine Figure~\ref{fig1}.
\begin{figure}
    \centering
    \includegraphics{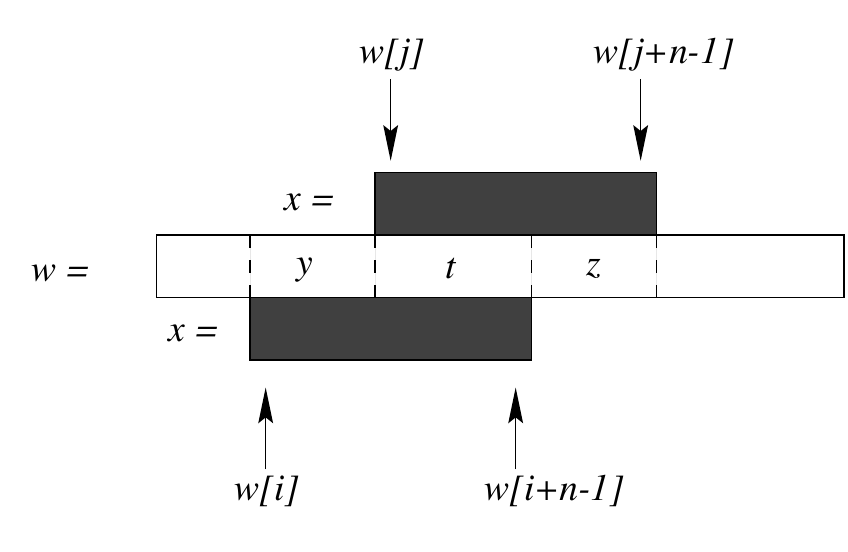}
    \caption{Overlapping factors}
    \label{fig1}
\end{figure}
Each of these words is nonempty, and $x = yt = tz$.  By the Lyndon-Sch\"utzenberger
theorem \cite{Lyndon&Schutzenberger:1962}, it follows that
there exist words $u, v$ with $u$ nonempty, and an integer $e\geq 0$, such that
$y = uv$, $t = (uv)^e u$, and $z = vu$.  
Thus $w[i..j+n-1] = ytz = (uv)^{e+2} u$, which contains an overlap.
\end{proof}

This suggests the following natural generalization of overlap:
a {\it $t$-overlap\/} is a word of the form $x x x'$, where
$x$ is a nonempty word of length at least $t$, and 
$x'$ is the first $t$ letters of $x$.
For example, the unfamiliar English word
{\tt prelinpinpin} contains a suffix that is a $2$-overlap.
Note that a $0$-overlap is a square, and a $1$-overlap is an ordinary overlap.
Of course, a $t$-overlap contains a $t'$-overlap for all $t' < t$.

Thue proved \cite{Thue:1912,Berstel:1995} that one can avoid
overlaps over any alphabet containing at least two letters.
Here by ``avoid'' we mean ``there exists an infinite word
containing no overlaps'' or, equivalently, ``there exist infinitely
many finite words containing no overlaps''.  Since
every $t$-overlap contains a $1$-overlap, Thue's construction also
shows it is possible
to avoid $t$-overlaps over any alphabet with at least two letters.

So instead, in this paper
we consider split occurrences of repetitions.
A {\it split occurrence} of a repetition is a word of the form $xyz$,
where $xz$ forms the repetition.  For example, the English word
{\tt contentment} contains a split occurrence of the $2$-overlap
{\tt ntentent}, which arises from the factor $xyz$,
where $x = {\tt ntent}$, $y = {\tt m}$, $z = {\tt ent}$.

It follows from known results that there exist
infinite words avoiding split occurrences
of $\alpha$-powers, for any rational number $\alpha > 2$
\cite{Mousavi&Shallit:2013}.  To see this,
take the alphabet size $k$ sufficiently large that there exists
an infinite word $\bf w$ over $\Sigma_k = \{0, 1,\ldots, k-1 \}$ avoiding $\alpha/2$ powers.
(By Dejean's theorem
\cite{Dejean:1972,Currie&Rampersad:2011,Rao:2011} this is possible.)
Suppose $xyz$ is a factor of $\bf w$
that is a split occurrence of a $\beta$ power for
$\beta \geq \alpha > 2$. Then clearly either $x$ is a
$\geq \beta/2$ power or $z$ is, a contradiction.

In contrast, in this paper we show that no matter what the alphabet
size is, there are no infinite words
avoiding split occurrences of $t$-overlaps.
Our main theorem is the following.
\begin{theorem}
Let $w$ be a word over a $k$-letter alphabet avoiding split
occurrences of $t$-overlaps.  Then $w$ is finite.
\label{main}
\end{theorem}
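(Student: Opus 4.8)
The plan is to reduce the existence of a split $t$-overlap to the existence of two suitably placed disjoint occurrences of a single factor, and then to force such occurrences in every sufficiently long word by a pigeonhole/de~Bruijn argument.

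First I would record the split analogue of Proposition~\ref{prop1}: a word contains a split occurrence of a $t$-overlap as soon as it contains a nonempty factor $v$ with $|v|\ge t$ occurring at two positions $a<c$ with $c>a+|v|$ (so the occurrences are disjoint and separated by a nonempty gap $y$), subject to the single extra requirement that the right occurrence be \emph{primed}, i.e.\ immediately followed by the length-$t$ prefix $v'=v[0..t-1]$ of $v$. Indeed, setting $x=v$ and $z=vv'$, the factor $w[a..c+|v|+t-1]=x\,y\,z$ has $xz=vvv'$, a genuine $t$-overlap, and $y$ is nonempty. This is the disjoint counterpart of Proposition~\ref{prop1}: overlapping repeats force a true overlap, while disjoint repeats force a split one.

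The second step manufactures the priming for free. Fix a length-$t$ factor $u$, list pairwise disjoint occurrences $p_1<p_2<\cdots<p_m$ of $u$, and set $B_i=w[p_i..p_{i+1}-1]$. Each \emph{return block} $B_i$ begins with $u$ and, since $u$ reappears at $p_{i+1}=p_i+|B_i|$, is immediately followed by $u=B_i'$; thus every occurrence of $B_i$ at $p_i$ is automatically primed. Consequently, if $B_i=B_j$ for some $j\ge i+2$, then $v=B_i$ occurs at $p_i$ and at $p_j$ disjointly with a nonempty gap, with the occurrence at $p_j$ primed, so by the first step $w$ contains a split $t$-overlap. Hence a word avoiding split $t$-overlaps must have \emph{pairwise distinct} return blocks; a lone adjacent coincidence $B_i=B_{i+1}$ is handled the same way, since the square $B_iB_i$ together with any later, gap-separated copy of $u=v'$ again produces a split $t$-overlap.

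The third and hardest step is to turn ``distinct return blocks'' into an absolute length bound, and this is exactly where the promised theorem on repeated disjoint occurrences enters. The idea is that if $w$ is long, then by pigeonhole on the $k^t$ words of length $t$ some $u$ recurs very often, and among its return blocks most are short (length at most a threshold $G$ depending only on $k$ and $t$), since only a few blocks can be long without exhausting the length budget; once the number of short return blocks exceeds twice the number $\sum_{\ell\le G}k^{\ell}$ of possible short words, two equal blocks at index-distance $\ge 2$ must appear, yielding a split $t$-overlap. I expect the main obstacle to be reconciling the two competing demands on the occurrences of $u$ — they must be \emph{disjoint} (so that $|B_i|\ge t$ and $B_i$ genuinely begins with $u$) yet \emph{closely spaced} (so that the blocks are short and few) — together with the degenerate highly periodic cases (self-overlapping or unary $u$), where one instead exhibits a split $t$-overlap directly from a high power. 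Packaging these demands into a sharp, quotable statement, whose extremal examples are concatenations of de~Bruijn sequences, is the technical heart of the argument, and is what I would isolate and prove as a separate theorem before deducing Theorem~\ref{main}.
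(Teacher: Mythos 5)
Your proposal is correct, and it reaches Theorem~\ref{main} by a genuinely different route from the paper. The paper introduces $C(k,n)$, the longest length of a word with no two disjoint occurrences of a common length-$n$ factor, proves $C(k,n)$ finite with an explicit bound (Theorem~\ref{thm3}, resting on Lemma~\ref{lem1}, which uses Lyndon--Sch\"utzenberger and Fine--Wilf to cap the number of mutually overlapping occurrences of $x$ at $\lceil n/\per(x)\rceil$), and then nests this twice: if $|w| > C(k,C(k,t)+1)$ then $w = pxqxr$ with disjoint copies of a factor $x$ of length $C(k,t)+1$, then $x = syuyv$ with $|y| = t$, and the factor $yuyvqsyuy$ contains the split $t$-overlap $yuy \cdot uy$ --- there the ``priming'' is supplied by the copy of $y$ inside the second occurrence of $x$, whereas you prime with the \emph{next} occurrence of $u$ in a return-block decomposition. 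Your step~2 is sound (the right block $B_j$ is always followed by $u$ at $p_{j+1}$ precisely because it is a return block, and the only coincidence that can survive is one adjacent pair at the very end, since any $B_i=B_{i+1}$ with a later gap-separated copy of $u$ is fatal), and your step~3 sketch does close and is not circular: pigeonhole gives some $u$ with $m \geq (|w|-t+1)/(tk^t)$ pairwise disjoint occurrences (greedy selection loses only a factor $t$); at most $|w|/G$ of the $m-1$ return blocks exceed length $G$, while distinctness caps the short ones at roughly $k^{G-t+1}$ (plus one for the terminal pair); taking $G = 2tk^t$ separates the constants and yields $|w| = O(tk^t \cdot k^{2tk^t})$. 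What each approach buys: yours is more elementary --- pure counting, with no periodicity machinery anywhere, since selecting a disjoint subfamily of occurrences sidesteps the paper's need to bound overlapping occurrences --- and it isolates a tidy return-word principle; the paper's detour through $C(k,n)$ yields a sharper, independently interesting quantitative theorem (the de Bruijn variation advertised in the abstract, with exact values for $n \leq 3$) and the crisper bound $S(k,t) \leq C(k,C(k,t)+1)$, though both routes give bounds doubly exponential in $t$. One trifle for a write-up: handle $t=0$ separately, since your nonempty gaps have length $\geq t$, which is vacuous there (e.g., $|w| \geq 2k+1$ forces some letter to occur three times, giving a split square with nonempty middle).
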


We also investigation the avoidance of {\it reversed split} repetitions.
A reversed split repetition is a word of the form $xyz$, where
$zx$ forms the repetition.  For example, the English word
{\tt independent} contains a reversed split overlap:  it has the
factor $xyz$, where $x = {\tt nde}$, $y = {\tt p}$, and
$z = {\tt ende}$, giving the overlap $zx = {\tt endende}$.  

\section{Some useful results on primitive words and bordered words}

We call a nonempty word $w$
{\it primitive\/} if $w$ cannot be written in the form
$x^k$ for an integer $k \geq 2$; see, for example, \cite{Domosi&Ito:2015}.

\begin{lemma}
Let $A_k(n,p)$ denote the number of length-$n$
words over $\Sigma_k$ with smallest period $p$, and let
$\psi_k(n)$ denote the number of primitive words over
$\Sigma_k$.  Then $A_k(n,p) = \psi_k(p)$ for $1 \leq p \leq {n\over 2} + 1$.
\label{lem2}
\end{lemma}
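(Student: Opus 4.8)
The plan is to exhibit an explicit bijection between the set of length-$p$ primitive words over $\Sigma_k$ and the set of length-$n$ words over $\Sigma_k$ whose smallest period is exactly $p$. Since a word $w$ of length $n$ having period $p$ is completely determined by its length-$p$ prefix $u = w[1..p]$ --- namely, $w$ is the length-$n$ prefix of the infinite word $u^\omega = uuu\cdots$ --- the natural candidate for the bijection is the map $\varphi$ sending a length-$p$ word $u$ to the length-$n$ prefix of $u^\omega$. I would then show that $\varphi$ restricts to a bijection from primitive words of length $p$ onto words of length $n$ with smallest period $p$, which at once gives $A_k(n,p) = \psi_k(p)$. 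Note that $p \le n$ throughout the stated range (as $p$ is a positive integer bounded by $n/2+1$), so $\varphi$ is well defined, and reading off its length-$p$ prefix recovers $u$; hence $\varphi$ is injective on length-$p$ words.

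For one inclusion, suppose $w$ has length $n$ and smallest period $p$, and set $u = w[1..p]$. If $u$ were imprimitive, say $u = v^m$ with $m \ge 2$, then $w$ would be a prefix of $v^\omega$ and hence would have period $|v| = p/m < p$, contradicting the minimality of $p$. Thus $u$ is primitive and $w = \varphi(u)$, so every word with smallest period $p$ lies in the image of $\varphi$. Observe that this direction uses nothing about the size of $p$ relative to $n$.

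The reverse inclusion is where the hypothesis $p \le n/2+1$ enters, and I expect it to be the main obstacle. Let $u$ be a primitive word of length $p$ and put $w = \varphi(u)$; clearly $w$ has period $p$, and I must rule out any strictly smaller period. Suppose toward a contradiction that $w$ had a period $q$ with $1 \le q < p$. Then $w$ has both periods $p$ and $q$, and since $q \le p-1$ while $\gcd(p,q) \ge 1$, we obtain $p + q - \gcd(p,q) \le 2p - 2 \le n$, the final inequality being precisely the hypothesis $p \le n/2 + 1$. The theorem of Fine and Wilf then guarantees that $w$ has period $d = \gcd(p,q)$, where $d \mid p$ and $d \le q < p$. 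But then $u = w[1..p]$ has a period $d$ properly dividing $p$, forcing $u = (w[1..d])^{p/d}$ with exponent $p/d \ge 2$, contradicting the primitivity of $u$. Hence the smallest period of $w$ is exactly $p$, so $\varphi(u)$ belongs to the target set and the bijection is complete, yielding $A_k(n,p) = \psi_k(p)$.
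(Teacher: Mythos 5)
Your proof is correct and takes essentially the same approach as the paper's: both identify a length-$n$ word of least period $p$ with its primitive length-$p$ prefix, and both use the Fine--Wilf theorem (applicable since $p \leq n/2+1$ gives $n \geq 2p-2 \geq p+q-\gcd(p,q)$) to show that extending a primitive word of length $p$ yields least period exactly $p$. The only cosmetic difference is the endgame, where you contradict primitivity directly via $d = \gcd(p,q) \mid p$ with $d < p$, while the paper splits into the cases $q \mid p$ (contradicting primitivity) and $q \nmid p$ (contradicting minimality of $q$); your version is marginally tidier but not a different argument.
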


\begin{proof}
We claim that every length-$n$ word $w$ with shortest period $p$ can be written
in the form $w = x^i x'$, where $x'$ is a prefix of $x$ and
$|x| = p$ and $x$ primitive.  For if $x$ were not primitive,
say $x = y^j$ for some $j \geq 2$, then $p$ could not be the
shortest period.

We now claim that if $x$ is primitive and $1 \leq p \leq {n\over 2} + 1$,
then $w = x^{n/p}$ has shortest period $p$.  Suppose to the contrary
that $w$ has shortest period $q < p$.   Since
$n \geq p + q - 1$, by the Fine-Wilf theorem \cite{Fine&Wilf:1965},
$w$ also has the period
$\gcd(p,q)$.  If $q$ divides $p$, then $x$ was not primitive, a
contradiction.  Otherwise $\gcd(p,q) < q$, a contradiction.
\end{proof}

A {\it border} of a word $w$ is a nonempty word $x$, $x \not = w$,
such that $x$ is both a prefix and suffix of $w$.  Thus
{\tt entanglement} has the border {\tt ent}.  If a word has
a border, it is called {\it bordered}, and otherwise it is called
{\it unbordered}.  It is easy to see
that if a word of length $n$ has a border, it must have a border
of length $\leq n/2$.

\begin{lemma}
For $k \geq 2, n \geq 1$,
there are at least $k^n(1 - 1/k - 1/k^2)$  unbordered words of 
length $n$ over a $k$-letter alphabet.
\label{lem4}
\end{lemma}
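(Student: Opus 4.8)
The plan is to count the unbordered words directly by setting up and solving a recurrence. Write $U(n)$ for the number of unbordered words of length $n$ over $\Sigma_k$, so $U(1) = k$ and $U(2) = k^2 - k$. The engine of the proof is the classical recurrence
\[
U(n) = \begin{cases} k\,U(n-1) & \text{if } n \text{ is odd}, \\ k\,U(n-1) - U(n/2) & \text{if } n \text{ is even}, \end{cases}
\]
valid for $n \ge 2$. Once this is in hand, the stated lower bound falls out of a short telescoping estimate, so the real work is in justifying the recurrence.

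To prove the recurrence I would use an insertion/deletion map at the center of the word. Given $w' \in U(n-1)$ and a letter $c$, insert $c$ immediately after position $\lfloor n/2 \rfloor$ to form a word $w$ of length $n$. A direct positional check shows that for every border length $\ell \le n/2 - 1$ the length-$\ell$ prefix and suffix of $w$ avoid the inserted letter and hence coincide with those of $w'$; since $w'$ is unbordered, $w$ has no border of length $\le n/2 - 1$. When $n$ is odd this forbids all possible borders, so the map is a bijection $U(n-1) \times \Sigma \to U(n)$ and $U(n) = k\,U(n-1)$. When $n$ is even the inserted words are exactly the length-$n$ words with no border of length $< n/2$; this set may still contain words with a border of length \emph{exactly} $n/2$, which must be removed. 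The main obstacle is this even case: I must show that a word of length $n$ whose shortest border has length exactly $n/2$ is precisely a square $w = bb$ with $b$ unbordered of length $n/2$. Indeed, a border of length $n/2$ forces $w = bb$, and if $b$ had a border $u$ then $u$ would be a strictly shorter border of $bb$, while if $b$ is unbordered then any border of $bb$ of length $< |b|$ would be a border of $b$, a contradiction. Hence these words number exactly $U(n/2)$, giving $U(n) = k\,U(n-1) - U(n/2)$.

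With the recurrence established, set $r(n) = U(n)/k^n$, so $r(1) = 1$ and $r(2) = 1 - 1/k$. Dividing the recurrence by $k^n$ gives $r(n) = r(n-1)$ for odd $n$ and $r(n) = r(n-1) - r(n/2)\,k^{-n/2}$ for even $n$; in particular $r$ is non-increasing, so $r(j) \le r(2) = 1 - 1/k$ for all $j \ge 2$. Telescoping from $r(1) = 1$ and reindexing the even steps by $j = m/2$ yields
\[
1 - r(n) = \sum_{j=1}^{\lfloor n/2 \rfloor} r(j)\,k^{-j} \le \frac{1}{k} + \Bigl(1 - \frac{1}{k}\Bigr)\sum_{j \ge 2} k^{-j} = \frac{1}{k} + \frac{1}{k^2},
\]
where the first term is the $j=1$ contribution $r(1)/k = 1/k$ and the geometric sum collapses to $1/k^2$. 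Therefore $r(n) \ge 1 - 1/k - 1/k^2$, which upon multiplying by $k^n$ gives $U(n) \ge k^n(1 - 1/k - 1/k^2)$, as required. I expect the odd case and the telescoping to be routine, with the identification of the length-$n/2$ squares in the even case being the one point demanding care.
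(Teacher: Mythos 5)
Your proof is correct, but it takes a genuinely different route from the paper's. The paper does not prove Nielsen's recurrence at all: it cites Nielsen only to conclude that $u_k(n)$ is a polynomial in $k$, verifies the bound for $n \leq 12$ by explicitly computing those polynomials, and then handles $n > 12$ combinatorially --- each unbordered word of length $12$ is written as $xz$ with $|x| = |z| = 6$, an arbitrary middle $y$ of length $n-12$ is inserted to produce $u_k(12)\,k^{n-12}$ words with no short border, and a union bound $k^{n-7} + k^{n-8} + \cdots \leq (k^{n-6}-1)/(k-1)$ is subtracted to account for borders of length $i$ with $6 < i \leq n/2$. You instead prove the recurrence ($U(n) = kU(n-1)$ for odd $n$, $U(n) = kU(n-1) - U(n/2)$ for even $n$) from scratch via the center-insertion bijection, then extract the bound uniformly for all $n$ by normalizing $r(n) = U(n)/k^n$ and telescoping. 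Your route buys several things: it avoids the case split at $n = 12$ and any machine computation, it is self-contained (the paper leans on an external recurrence; you prove yours), and the exact identity $1 - r(n) = \sum_{j=1}^{\lfloor n/2 \rfloor} r(j)\,k^{-j}$ is sharper than needed --- one could iterate it to improve the constant. The paper's route, by contrast, needs no structural analysis of borders beyond the crude union bound, at the cost of the finite verification. Your identification of the length-$n$ words with shortest border exactly $n/2$ as squares $bb$ with $b$ unbordered is exactly right and is the correct subtraction term.

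One small repair to your write-up: you state the positional check for border lengths $\ell \leq n/2 - 1$ and then claim that for odd $n$ this ``forbids all possible borders,'' but for odd $n$ the minimal border of a length-$n$ word can have length $(n-1)/2$, which exceeds $n/2 - 1$. The check itself does cover this case: with the inserted letter at position $\lfloor n/2 \rfloor + 1$, a length-$\ell$ suffix avoids it precisely when $\ell \leq \lceil n/2 \rceil - 1$, which equals $(n-1)/2$ for odd $n$ and $n/2 - 1$ for even $n$. Stating the threshold as $\lceil n/2 \rceil - 1$ closes the odd case literally as well as in spirit; with that one-word fix the argument is complete.
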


\begin{proof}
Let $u_k (n)$ denote the number of unbordered words of length
$n$ over a $k$-letter alphabet.  It follows from the recurrence for
$u_k(n)$ given in \cite{Nielsen:1973} that $u(k,n)$ is a polynomial of degree $n$ in $k$.   By explicit computation of these polynomials for $n = 1, 2, \ldots, 12$,
we can easily verify the inequality $u_k(n) \geq k^n(1-1/k-1/k^2) \geq k^n(1-1/k-1/k^2)$ for $n \leq 12$.    In particular,
$u_{k} (12) = k^{12} - k^{11} - k^{10} + k^6 + k^5 - k^2$.

Now assume $n > 12$.   For each unbordered word $w$ of length $12$, write $w = xz$ with $|x| = |z| = 6$, and consider the words $xyz$ of length $n$, where $y$ is an arbitrary word of length $n-12$.  There are $u_k(12) k^{n-12}$ such words.   Each such word is unbordered, unless it has a border of length $i$ for $6 < i \leq n/2$.
But the total number of words with 
border length $i$ satisfying $6 < i \leq n/2$ is at most $$
k^{n-7} + k^{n-8} + \cdots + k^{n/2} 
\leq (k^{n-6} - 1)/(k-1) .$$

Therefore, there are least
$$u_k (12) k^{n-12} - (k^{n-6} -1)/(k-1) = 
 k^n(1-1/k - 1/k^2 + 1/k^6 + 1/k^7 - 1/k^{10}) - (k^{n-6} -1)/(k-1)$$
unbordered words of length $n$ for 
$n > 12$.  Since $k^n/k^6 \geq (k^{n-6} -1)/(k-1)$ and $k^{n-7} \geq k^{n-10}$,
the desired bound follows.
\end{proof}

\section{Disjoint occurrences}

Let $\Sigma_k = \{ 0, 1, \ldots,k-1 \}$ be an alphabet of $k \geq 1$
letters.
It is known that for every $k \geq 1$ and $n \geq 1$,
there exists a word of length $k^n + n-1$ that contains every
length-$n$ word exactly once as a factor; such words are
called {\it de Bruijn words} of order $n$; see
\cite{deBruijn:1946,deBruijn:1975}.   This bound of $k^n + k-1$
is optimal, because from the pigeonhole
principle, it follows that if $w$ is a word of length
$\geq k^n + n$, then $w$ must contain at least two different
occurrences of some word $x$ of length $n$.  

However, these two different occurrences of $x$ could overlap each
other in $w$.   If two distinct occurrences do not overlap,
we say they are {\it disjoint}.)

If we insist on having two disjoint
occurrences, we get a different bound.
For example, there are binary words of length $7$
that do not contain two disjoint occurrences of
the same length-$2$ word, such as $0111000$.  
Let us define $C(k,n)$ to be the length of the longest
word over $\Sigma_k$ having the property that there are no
two disjoint occurrences of the same word.
By considering disjoint occurrences of length-$n$ blocks,
the pigeonhole principle easily gives the bound
$C(k,n) < n(k^n + 1)$.  
We now obtain some better bounds on $C(k,n)$.

We need a lemma.

\begin{lemma}
Let $x,w$ be words with $|x| = n$.  
Suppose $w$ contains $m$ occurrences of $x$, but
not two or more disjoint occurrences.   Then
$m \leq \lceil n/\per(x) \rceil$. Furthermore, for
each individual $x$, this upper bound is achievable.
\label{lem1}
\end{lemma}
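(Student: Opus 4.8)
The plan is to track the start positions of the occurrences and exploit the elementary fact that two overlapping occurrences of $x$ force a period of $x$. Write the occurrences of $x$ in $w$ as starting at positions $i_1 < i_2 < \cdots < i_m$, so that the $j$-th occupies the interval $[i_j, i_j + n - 1]$. Since no two occurrences are disjoint, every pair overlaps; in particular the extreme pair overlaps, which immediately gives the span bound $i_m - i_1 \le n - 1$.

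Next I would show that consecutive start positions differ by at least $\per(x)$. If $x$ occurs at positions $a < b$ with $b - a < n$, then comparing $x = w[a..a+n-1]$ with $x = w[b..b+n-1]$ on their overlap yields $x[k] = x[k + (b-a)]$ for all $k$ with $0 \le k \le n-1-(b-a)$; that is, $b - a$ is a period of $x$. This is the same mechanism underlying Proposition~\ref{prop1}. Applying it to each consecutive pair $i_j < i_{j+1}$, which overlap by hypothesis, shows that $i_{j+1} - i_j$ is a positive period of $x$ and hence at least $p := \per(x)$. Summing the $m-1$ gaps then gives $(m-1)p \le i_m - i_1 \le n-1$, so $m \le 1 + (n-1)/p$. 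Since $m$ is an integer and $\lfloor (n-1)/p\rfloor + 1 = \lceil n/p\rceil$ for all $n, p \ge 1$, this repackages as exactly $m \le \lceil n/\per(x)\rceil$.

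For achievability, set $m = \lceil n/\per(x)\rceil$, let $y = x[0..p-1]$ be the length-$p$ prefix of $x$, and take $w = y^{m-1}x$, a word of length $(m-1)p + n$. Because $x$ has period $p$ and $y$ is its length-$p$ prefix, the entire word $w$ has period $p$, and therefore $w[jp..jp+n-1] = x$ for every $0 \le j \le m-1$, producing $m$ occurrences. The point requiring care is to check that no two occurrences are disjoint, including any occurrences beyond the obvious periodic ones: every occurrence of $x$ in $w$ must start at a position in $[0,(m-1)p]$, and $(m-1)p = p\bigl(\lceil n/p\rceil - 1\bigr) \le n-1 < n$, so any two start positions differ by less than $n$ and the occurrences overlap. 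Hence all $m$ occurrences are pairwise overlapping, and the bound is attained.

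The argument is a direct counting one, so I do not anticipate a serious obstacle; the two places demanding attention are the arithmetic step that converts $1 + \lfloor(n-1)/p\rfloor$ into $\lceil n/p\rceil$, and the verification in the achievability construction that the span $(m-1)p$ stays strictly below $n$, which is what guarantees \emph{every} pair of occurrences overlaps rather than merely the periodic ones we placed deliberately.
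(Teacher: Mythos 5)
Your argument is correct, and it follows the same skeleton as the paper's proof: bound each gap between consecutive occurrences below by $\per(x)$, bound the total span of start positions by $n-1$, and exhibit a periodic word achieving $\lceil n/\per(x)\rceil$. The one genuine difference is how the gap bound is justified. The paper introduces the minimal inter-occurrence distance $d$, writes the overlap as $x = yt = tz$, and invokes the Lyndon--Sch\"utzenberger theorem to identify $d$ with the period; you instead observe directly that if $x$ occurs at positions $a < b$ with $b - a < n$, then matching the two copies on their overlap yields $x[k] = x[k+(b-a)]$ for $0 \le k \le n-1-(b-a)$, so $b-a$ is a period of $x$ and hence at least $\per(x)$. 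Your route is more elementary (it needs only the definition of period, not the Lyndon--Sch\"utzenberger factorization), and summing all $m-1$ gaps is slightly cleaner than the paper's argument through the minimal gap, which asserts $d = \per(x)$ when only $d \ge \per(x)$ is needed (or guaranteed). On achievability, your construction is, after unwinding notation, literally the paper's word: with $x = y^f u$, $y = uv$, $p = |y|$, and $m - 1 = f$, your $y^{m-1}x$ equals the paper's $(uv)^{2f}u$. You go a bit further than the paper by explicitly verifying that \emph{every} occurrence of $x$ in $w$ starts in $[0,(m-1)p]$ with $(m-1)p \le n-1$, so no unintended disjoint pair can arise --- a point the paper dismisses as ``easy to see.'' The arithmetic conversion $1 + \lfloor (n-1)/p \rfloor = \lceil n/p \rceil$ for $n, p \ge 1$ is the standard identity and is applied correctly.
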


\begin{proof}
Let $w$ contain the maximum possible number of overlapping
occurrences of the length-$n$
word $x$, and no disjoint occurrences of $x$.
Let $d$ be the shortest distance between two consecutive occurrences
of $x$ in $w$.   If there are $m$ overlapping occurrences,
then the last occurs at distance at least $d(m-1)$ from the first.
If $d(m-1) \geq n$, then the last occurrence does not overlap the first,
so $d(m-1) < n$.  It follows that $m < n/d + 1$, and since
$t$ is an integer, we have $m \leq \lceil n/d \rceil$.

We now show that $d = \per(x)$.  Two overlapping occurrences of
$x$ with the shortest distance between them correspond to writing
$x = yt = tz$ for some $y, t, z$ (with $t$ the overlap), with
$1 < |t| < n$, and minimizing $|y|$; see the diagram.
Now, from the
Lyndon-Sch\"utzenberger theorem \cite{Lyndon&Schutzenberger:1962},
it follows that there exist $u,v$ with $u$ nonempty and
an integer $e \geq 0$ such that
that $y = uv$, $t = (uv)^e u$, and $z = vu$.
Hence $y = uv$ is a period of $x$; to minimize $y$ we take
$y$ to be the shortest period of $x$.

We have now shown that $m \leq \lceil n/\per(x) \rceil$.
It remains to see that this bound is always achievable.
Let $y$ be the shortest period of $x$, and write
$x = y^f u$, where $u$ is a nonempty prefix of $y$, possibly equal
to $y$ itself.   Then $y = uv$ for some (possibly empty) $v$.
Consider the word $w = (uv)^{2f} u$; 
it is easy to see that
$x = (uv)^f u$ overlaps itself at least $f+1$ times in this $w$.
Since $f|y| < n \leq (f+1) |y|$, it follows that
$f + 1 = \lceil n/\per(x) \rceil$.
\end{proof}

\begin{theorem}
We have 
$$C(k,n) \leq \left( \sum_{w \in \Sigma_k^n} \left\lceil {n \over {\per(w)}} 
\right\rceil \right) + n-1.$$
\label{thm3}
\end{theorem}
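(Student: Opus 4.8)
The plan is to bound the length of a word $w$ over $\Sigma_k$ that contains no two disjoint occurrences of the same length-$n$ factor, by counting total occurrences of length-$n$ factors. Let $N = |w|$. If $N \geq n$, then $w$ contains exactly $N - n + 1$ occurrences of length-$n$ factors, counted with multiplicity over all starting positions. The idea is to partition these occurrences according to which word $x \in \Sigma_k^n$ they are occurrences of, and bound the contribution from each $x$ separately using Lemma~\ref{lem1}.

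First I would observe that, by the avoidance hypothesis, for each fixed word $x \in \Sigma_k^n$, the occurrences of $x$ in $w$ have no two disjoint among them, so Lemma~\ref{lem1} applies and tells us $w$ contains at most $\lceil n/\per(x) \rceil$ occurrences of $x$. Summing over all possible $x$ then gives that the total number of length-$n$ factor occurrences in $w$ is at most $\sum_{x \in \Sigma_k^n} \lceil n/\per(x) \rceil$. On the other hand, as noted above, the total number of such occurrences is exactly $N - n + 1$ whenever $N \geq n$.

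Combining these two facts yields
$$
N - n + 1 \;\leq\; \sum_{w \in \Sigma_k^n} \left\lceil \frac{n}{\per(w)} \right\rceil,
$$
and solving for $N$ gives $N \leq \left( \sum_{w \in \Sigma_k^n} \lceil n/\per(w) \rceil \right) + n - 1$, which is exactly the claimed bound on $C(k,n)$. One should also check the trivial boundary case $N < n$ separately, where $w$ contains no length-$n$ factor at all; but since the right-hand side is clearly at least $n - 1$ (indeed each summand is at least $1$ and there are $k^n \geq 1$ terms), the bound holds vacuously there too.

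I do not expect any serious obstacle here, since the statement is essentially an aggregation of Lemma~\ref{lem1} over all length-$n$ words combined with the exact occurrence count $N - n + 1$. The only point requiring mild care is the bookkeeping: making sure each occurrence of a length-$n$ factor of $w$ is counted exactly once and attributed to precisely one $x \in \Sigma_k^n$ (namely the factor it equals), so that the disjoint summands from Lemma~\ref{lem1} genuinely cover all occurrences without overlap or omission. Once that correspondence is stated cleanly, the inequality follows immediately.
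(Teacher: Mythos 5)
Your proof is correct and follows essentially the same route as the paper's: apply Lemma~\ref{lem1} to each word $x \in \Sigma_k^n$ separately, sum the resulting per-word occurrence bounds, and compare against the total count of length-$n$ factor occurrences, which for a word of length $N \geq n$ is $N - n + 1$, yielding $N \leq \bigl( \sum_{x \in \Sigma_k^n} \lceil n/\per(x) \rceil \bigr) + n - 1$. Your version is in fact slightly more careful than the paper's, since you state the exact occurrence count explicitly and handle the trivial case $N < n$, both of which the paper leaves implicit.
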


\begin{proof}
Let $w$ be a longest word having no disjoint occurrences of
the same length-$n$ factor.  Let us now count the number of occurrences
of each length-$n$ factor $x$ in $w$.
By Lemma~\ref{lem1}, $w$ can contain at most
$\lceil n/\per(x) \rceil$ occurrences of $x$.
Thus, in the worst case, $w$ can have at most
$\sum_{x \in \Sigma_k^n} \lceil {n \over {\per(x)}} \rceil$
total occurrences of length-$n$ words.  
Thus the word can be of length at most
$\left( \sum_{x \in \Sigma_k^n} 
\lceil {n \over {\per(x)}} \rceil \right) + n-1$.
\end{proof}

\begin{corollary}
For $k \geq 2$
we have $C(k,n) \leq k^n( 1 + 1/k + 1/k^2  ) +n (k^{n/2 + 1}-1)/(k-1 ) + n-1 $.
\end{corollary}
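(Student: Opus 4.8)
The plan is to bound the sum appearing in Theorem~\ref{thm3}, namely $S := \sum_{w \in \Sigma_k^n} \lceil n/\per(w) \rceil$, and then simply add the trailing $n-1$. The natural first move is to group the words by their smallest period: writing $p = \per(w)$, we have $S = \sum_{p=1}^n A_k(n,p) \lceil n/p \rceil$, and I would split this sum at $p = n/2$. The point of the split is that $\lceil n/p \rceil \le 2$ exactly when $p > n/2$, while for $p \le n/2$ we can fall back on Lemma~\ref{lem2}, which supplies the clean formula $A_k(n,p) = \psi_k(p) \le k^p$ in precisely this range.

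For the large-period range $p > n/2$, the contribution is $U + 2M$, where $U = A_k(n,n) = u_k(n)$ counts the unbordered words (for which $\lceil n/p \rceil = 1$) and $M = \sum_{n/2 < p < n} A_k(n,p)$ counts the remaining words in this range (for which $\lceil n/p \rceil = 2$). The key observation, and the step I expect to carry the most weight, is to rewrite $U + 2M = (U + M) + M$. Since $U + M$ counts a subset of all length-$n$ words, it is at most $k^n$; and since every word counted by $M$ has period strictly less than $n$ and is therefore bordered, $M$ is bounded by the total number of bordered words. By Lemma~\ref{lem4} we have $u_k(n) \ge k^n(1 - 1/k - 1/k^2)$, so the number of bordered words is at most $k^n - u_k(n) \le k^{n-1} + k^{n-2}$. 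Hence the large-period contribution is at most $k^n + k^{n-1} + k^{n-2} = k^n(1 + 1/k + 1/k^2)$, which is exactly the leading term of the claimed bound.

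For the small-period range $p \le n/2$, I would bound crudely but adequately: since $p \ge 1$ we have $\lceil n/p \rceil \le n$, and by Lemma~\ref{lem2} there are $\psi_k(p) \le k^p$ words of smallest period $p$. This gives a contribution of at most $n \sum_{p=1}^{\lfloor n/2 \rfloor} k^p \le n (k^{n/2+1}-1)/(k-1)$, a geometric sum dominated by its top term. Adding the two ranges yields $S \le k^n(1 + 1/k + 1/k^2) + n(k^{n/2+1}-1)/(k-1)$, and combining with Theorem~\ref{thm3} gives the corollary. The only genuine subtlety is the bookkeeping at the boundary $p = n/2$ and the parity of $n$, but these affect nothing beyond which geometric sum is invoked; the real content is the rearrangement $U + 2M = (U+M) + M$ together with the scarcity of bordered words supplied by Lemma~\ref{lem4}.
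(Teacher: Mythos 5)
Your proposal is correct and takes essentially the same approach as the paper: the same split of the sum by smallest period at $n/2$ (with the large-period range subdivided at $p=n$), the same use of Lemma~\ref{lem2} with $\psi_k(p)\le k^p$ to get the geometric term $n(k^{n/2+1}-1)/(k-1)$, and the same appeal to Lemma~\ref{lem4} for the unbordered count. Your rearrangement $U+2M=(U+M)+M$ bounded via bordered words is just a repackaging of the paper's step of discarding the $-2A$ term to reach $2k^n-u_k(n)\le k^n(1+1/k+1/k^2)$.
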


\begin{proof}
We split the sum
$\sum_{x \in \Sigma_k^n} \lceil {n \over {\per(x)}} \rceil$
into three parts:  one where
$\per(x) \leq n/2$, one where $n/2 < \per(x) < n$,
and one where $\per(x) = n$.

From Lemma~\ref{lem2} above, the number of length-$n$
words $x$ with smallest period $p \leq n/2$ is
$\psi(k,p)$, the number of primitive words of length
$p$ over a $k$-letter alphabet.   
Write $A = \sum_{1 \leq p \leq n/2} \psi(k,p)$
and $B = \sum_{1 \leq p \leq n/2} \psi(k,p) \lceil n/p \rceil$.
It is known that
$\psi(k,n) = \sum_{d|n} \mu(d) k^{n/d}$, where
$\mu$ is the M\"obius function from number theory (see, e.g., \cite[p.~245]{Domosi&Ito:2015}),
but the much weaker bound $\psi(k,n) \leq k^n$ suffices for our
purposes here.   
Thus $B \leq n (k + k^2 + \cdots + k^{n/2}) \leq n (k^{n/2+1} -1)/(k-1)$.  

The number of words with period $n$ is $u_k(n)$, the number
of unbordered words of length $n$.   From Lemma~\ref{lem4} we have
$u_k(n)  \geq  k^n(1 - 1/k - 1/k^2 )$.  
Thus we have
\begin{align}
\sum_{x \in \Sigma_k^n} \left\lceil {n \over {\per(x)}} \right\rceil
&=  B + 2(k^n - A - u_k (n)) + u_k(n)  \label{formula} \\
&\leq  2 k^n  - u_k (n) + B \nonumber \\
&\leq  2k^n - k^n(1 - 1/k - 1/k^2 ) + B  \nonumber \\
& \leq k^n( 1 + 1/k + 1/k^2 ) + n (k^{n/2+1} -1)/(k-1), \nonumber
\end{align}
from which the result follows.
\end{proof}

\begin{theorem}
\leavevmode
\begin{itemize}
\item[(a)] $C(1,n) = 2n-1$ for $n \geq 1$;
\item[(b)] $C(k,1) = k$ for $k \geq 1$;
\item[(c)] $C(k,2) = k^2 + k + 1$ for $k \geq 1$;
\item[(d)] $C(k,3) = k^3 + k^2 + k + 2$ for $k \geq 1$.
\end{itemize}
\end{theorem}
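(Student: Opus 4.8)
The plan is to derive all four upper bounds as immediate consequences of Theorem~\ref{thm3}, and to obtain the matching lower bounds either by an explicit word (for (a) and (b)) or by a de Bruijn-style Eulerian circuit construction (for (c) and (d)).

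\emph{Upper bounds.} For each part I would evaluate the sum $\sum_{w \in \Sigma_k^n} \lceil n/\per(w)\rceil$ of Theorem~\ref{thm3} \emph{exactly}, using the classification of length-$n$ words by smallest period from Lemma~\ref{lem2}. For $n=1$ every word has period $1$, so the sum is $k$ and the theorem gives $C(k,1)\le k$. For $n=2$ there are $k$ words of period $1$ (the $aa$) contributing $\lceil 2/1\rceil=2$ each and $k^2-k$ words of period $2$ contributing $1$ each, a total of $2k+(k^2-k)=k^2+k$; adding $n-1=1$ yields $C(k,2)\le k^2+k+1$. For $n=3$ the words split into $k$ of period $1$ (contributing $3$ each), $k^2-k$ of period $2$ (the words $aba$ with $a\neq b$, contributing $2$ each), and $k^3-k^2$ of period $3$ (contributing $1$ each), for a total of $3k+2(k^2-k)+(k^3-k^2)=k^3+k^2+k$; adding $n-1=2$ gives $C(k,3)\le k^3+k^2+k+2$. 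The case $k=1$ of (a) is the same computation with the single word $0^n$ of period $1$.

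\emph{Easy lower bounds.} For (a), the word $0^{2n-1}$ over $\Sigma_1$ contains $0^n$ only in overlapping occurrences, so it has no two disjoint ones, matching the bound $2n-1$. For (b), the word $0\,1\,2\cdots(k-1)$ uses every letter once, so no letter recurs, giving length $k$. For the lower bounds in (c) and (d) I would realize the extremal word as the letter-sequence of an Eulerian circuit in the order-$(n-1)$ de Bruijn graph, assigning to each length-$n$ factor $x$ the multiplicity $\lceil n/\per(x)\rceil$ permitted by Lemma~\ref{lem1}. A direct in-degree/out-degree computation shows that with these multiplicities every vertex is balanced, and since all factors have positive multiplicity the graph is strongly connected, so an Eulerian circuit exists; a circuit with $E$ edges yields a word of length $E+(n-1)$, and $E$ is exactly the total occurrence count computed above, so the lengths are precisely $k^2+k+1$ and $k^3+k^2+k+2$.

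\emph{The main obstacle.} The delicate point — which I expect to be the hardest step — is that a bare Eulerian circuit only guarantees the correct \emph{number} of occurrences of each factor, not that the repeated occurrences \emph{overlap}: if the loop $aaa$ is traversed three times at separated moments we obtain three \emph{disjoint} occurrences, which is fatal. To force non-disjointness I would use a collapse-and-expand device: collapse each repeated group of edges into a single edge before finding the circuit, verify this preserves balance and strong connectivity, and then expand each such edge back into its forced overlapping block when reading off the word. The collapsed loop at $aa$ expands to a single maximal run of $a$'s (namely $aaa$ for $n=2$ and $aaaaa$ for $n=3$), which produces all of its occurrences of the period-$1$ factor inside one overlapping run; for (c) this loop-collapse is the only repetition to handle. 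For (d) the genuinely tricky case is period $2$: the two copies of $aba$ and the two copies of $bab$ must appear together in the single block $ababab$, which as a walk is the \emph{detour} $ab\to ba\to ab\to ba\to ab$ based at $ab$ rather than a simple loop. I would collapse this whole detour to one loop at $ab$, check the degree bookkeeping at both $ab$ and $ba$, and then verify that expanding it to $ababab$ creates no spurious occurrences at the block's two boundaries — the key observation being that in the collapsed graph every non-loop edge entering $ab$ is a period-$3$ factor $xab$ with $x\neq b$ and every non-loop edge leaving $ab$ is a period-$3$ factor $abc$ with $c\neq a$, so neither boundary can accidentally manufacture an extra $aba$ or $bab$. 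Finally I would confirm, by noting that the length-$n$ factors of the word are exactly the edge-traversals of the expanded walk, that each factor occurs with precisely its intended multiplicity and that all repeated occurrences lie inside their designated overlapping block, so no two disjoint occurrences exist.
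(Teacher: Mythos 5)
Your proposal is correct, and for parts (a), (b), (c) and all four upper bounds it essentially coincides with the paper: the upper bounds come from the same exact evaluation of the sum in Theorem~\ref{thm3} via the period classification, and your collapse-and-expand construction for (c), specialized to $n=2$, is literally the paper's construction (take a de Bruijn word of order $2$ and replace each $aa$ by $aaa$). Where you genuinely diverge is the lower bound in (d), which is the hard part. The paper first proves a standalone lemma --- that for every $k \geq 2$ there is a $k$-ary de Bruijn word of order $3$ containing $abab$ or $baba$ for every pair $a \neq b$ --- by exhibiting an explicit non-singular feedback function $f(a_1a_2a_3) = a_1 + a_2 - a_3$ and invoking a successor-rule framework from an external reference to stitch its cycles together, arranging that the joins never break the $abab$ occurrences; it then inserts $ab$ after each $abab$ and $aa$ after each $aaa$. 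You instead build the extremal word directly as an Eulerian circuit in a modified order-$2$ de Bruijn multigraph: collapse the triple loop $aaa$ to one loop (expanding to $a^5$) and the period-$2$ detour $ab \to ba \to ab \to ba \to ab$ to one loop at $ab$ (expanding to $ababab$), then check balance, strong connectivity (the deleted edge $ab \to ba$ is replaced by the detour $ab \to bb \to ba$, so connectivity survives), and the boundary condition that non-loop edges at $ab$ have period $3$, so expansion creates no spurious $aba$ or $bab$. Your boundary analysis is exactly the needed verification, and since each collapsed loop has multiplicity $1$, every repeated factor is confined to its single designated block, so no disjoint pair arises. What each approach buys: the paper's route yields an explicit successor rule but depends on cited machinery and on the nontrivial fact that $aba$ and $bab$ can be made adjacent in some de Bruijn word, whereas your route manufactures that adjacency by construction, is self-contained (only standard Euler-circuit criteria), and arguably scales more naturally if one wanted to attempt exact values for larger $n$, at the cost of the careful multigraph bookkeeping you correctly identified as the delicate step.
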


\begin{proof}
\leavevmode
\begin{itemize}
\item[(a)]  A unary word of length $2n$ has two disjoint
length-$n$ occurrences.
\item[(b)]  A word of length $k+1$, by the pigeonhole principle,
has two occurrences of a single letter.
\item[(c)]  Take a de Bruijn word of order $2$ over a 
$k$-letter alphabet; it has length $k^2 + 1$.
Replace each occurrence of $aa$ with $aaa$;
such a replacement clearly does not introduce
any disjoint occurrences.  The resulting word has
length $k^2 + k + 1$.  This gives the lower bound.
For the upper bound, we use Theorem~\ref{thm3}.
All length-$2$ words have period $2$, except those
of the form $aa$, which have period $1$.  
Then the sum in Theorem~\ref{thm3} gives the upper bound.
\item[(d)]  For the upper bound, we note that
all length-$3$ words have period $3$, except
that $aaa$ has period $1$ and $aba$, with $a \not=b$,
has period $2$.  The sum in Theorem~\ref{thm3} then gives
$k^3 + k^2 + k + 2$.  

To see the lower bound, we need some terminology and a lemma.

A function $f: \Sigma_k^n \to \Sigma_k$ is said to be a \emph{feedback
function}. A feedback function $f$ is said to be \emph{non-singular}
if the function $F: \Sigma_k^n \to \Sigma_k^n$ defined as $F(a_1a_2\cdots
a_n) = a_2\cdots a_n f(a_1a_2\cdots a_n)$ is one-to-one.

A \emph{universal cycle} for a set of words $S\subseteq \Sigma_k^n$ is a
length-$|S|$ word that, when considered circularly, contains every word in
$S$ as a factor. A non-singular feedback function partitions $\Sigma_k^n$
into sets $S_1,S_2,\ldots, S_m$, each having a corresponding universal
cycle. For each word $w=w_1w_2\cdots w_n \in S_i$, for some $1\leq i \leq m$, we have that $w_2w_3\cdots w_nf(w) \in S_i$ and $w$ has a corresponding
word $v = v_1v_2\cdots v_n\in S_i$ such that $w = v_2v_3\cdots v_n f(v)$.
The \emph{cycle representative} of a set $S$ is the lexicographically
least word in $S$. It is well known that de Bruijn words can be
constructed by joining the universal cycles in a specific way, sometimes with use of a successor rule. A \emph{successor rule} is a feedback function that determines
the next symbol in a de Bruijn word using the previous $n$ symbols.
\begin{lemma}
For all $k\geq 2$, there exists a $k$-ary de Bruijn word of order $3$ that contains either $abab$ or $baba$ for all $a\neq b$ where $a,b\in \Sigma_k$.
\end{lemma}
\begin{proof}
Consider the feedback function $f:\Sigma_k^3 \to \Sigma_k$ defined by
$f(a_1a_2a_3) = a_1 + a_2 - a_3$. We will show that the function $F(a_1a_2a_3) = a_2a_3f(a_1a_2a_3)$ is one-to-one. Suppose there exist two words $a_1a_2a_3$ and $b_1b_2b_3$ such that $F(a_1a_2a_3) = F(b_1b_2b_3)$. Then we would have that $a_2a_3(a_1+a_2-a_3) = b_2b_3(b_1+b_2 - b_3)$. But this implies that $a_2 = b_2$,$a_3 = b_3$, and $a_1+a_2-a_3 = b_1+b_2-b_3$. These three equations imply $a_1 = b_1$. Now we have $a_1a_2a_3 = b_1b_2b_3$. Therefore $F$ is one-to-one.

We now define a feedback function $g$ based $f$ and we show $g$ is a successor rule. Let
$a_1a_2a_3 \in \Sigma_k^3$. Let $\tau(a_2a_3)$ be an increasing sequence
of symbols $c\in \Sigma_k$ such that $a_2a_3c$ is a cycle representative
of some set in the partition of $\Sigma_k^3$ by $f$. If $0$ is in
$\tau(a_2a_3)$ and $a_2a_3c \neq 000$, then prepend $f(0a_2a_3)$
to the sequence. If $0$ is not in $\tau(a_2a_3)$ and $\tau(a_2a_3)$
is nonempty, then prepend $0$ to the sequence. Let $t_0,t_1,\ldots,
t_{p-1}$ be the sequence $\tau(a_2a_3)$. Let $g:\Sigma_k^3 \to \Sigma_k$
be a feedback function defined as follows:
$$ g(a_1a_2a_3) =  
\begin{cases}
      t_{(j+1)\text{ mod }p}, &\text{ if }f(a_1a_2a_3) = t_j\text{ for some }j \in
      \{0,1,\ldots , p-1\}; \\
f(a_1a_2a_3), &\text{ otherwise.}
\end{cases}
$$
By \cite[Theorem 4.3]{GSWW2020} we
have that $g$ is a successor rule. We now argue that $g(aba)
= b$ for all $a,b\in \Sigma_k$ with $a<b$. Since $f(aba)
= a + b - a = b$, it suffices to show that $\tau(ba)$
is empty. Suppose that $\tau(ba)$ is nonempty. Then
there exists a $d\in \Sigma_k$ such that $bad$ is a cycle
representative of some set $S'$ in the partition of $\Sigma_k^3$
by $f$. Consider the word $adf(bad) \in S'$. Since $a < b$, we
have that $adf(bad)$ is lexicographically smaller than $bad$. Thus $bad$ cannot be a cycle
representative. So  $\tau(a_2a_3)$ is empty.
\end{proof}

We can now continue with the proof of the
lower bound.  Suppose $abab$ is the occurrence; we can then insert
$ab$ immediately after its occurrence.  We can also
insert $aa$ after the unique occurrence of $aaa$
for each letter $a$.   This introduces no disjoint
occurrences, but adds $k(k-1) + 2k$ letters
to the de Bruijn word of length $k^3 + 2$,  thus matching
the upper bound.
\end{itemize}
\end{proof}

Computing the exact value of $C(k,n)$, even for $k$ and $n$ seems like a difficult
problem.   In Table~\ref{tab1} below we give the first few
values of this function, obtained by brute force of the solution space.
\begin{table}[H]
\begin{center}
\begin{tabular}{|c|ccccccc|}
\hline
\diagbox{$k$}{$n$} & 1 & 2 & 3 & 4 & 5 & 6 & 7 \\
\hline
1 & 1 & 3 & 5 & 7 & 9 & 11 & 13 \\
2 & 2 & 7 & 16 & 32 & 59 & 110 & $\geq 192$  \\
3 & 3 & 13 & 41 & & & &\\
4 & 4 & 21 & 86 & & & &\\
5 & 5 & 31 & & & & & \\
\hline
\end{tabular}
\end{center}
\caption{Values of $C(k,n)$}
\label{tab1}
\end{table}
Words achieving the bounds in Table~\ref{tab1} are given below:
\begin{table}[H]
\begin{center}
\begin{tabular}{c|c|l}
$k$ & $n$  & Word achieving $C(k,n)$ \\
\hline
2 & 2 & $0001110$ \\
2 & 3 & $0000010101111100$ \\
2 & 4 & $01010100100110110111111100000001$ \\
2 & 5 & $00000000010001000110011001110100101010101101101111111110000$ \\
2 & 6 & {\tiny 00000000000100001000011000110001110011100111101000101001010010110010011011011010101010111011101111111111100000} \\
3 & 2 & $0001021112220$ \\
3 & 3 & $00000101011002020210220121212222211111200$ \\
4 & 2 & $000102031112132223330$ \\
4 & 3 & {\footnotesize 00000101011002020210220030303103201203301302311111212122113131321331232323333322222300} \\
5 & 2 & $0001020304111213142223243334440$ 
\end{tabular}
\end{center}
\caption{Words achieving the bounds in Table~\ref{tab1}}
\end{table}
For all of the entries in this table, except $(4, 3)$,
the word given is guaranteed
to be the lexicographically least.

The value $C(2,6) = 110$ and the associated
lexicographically least string, and the bound $C(2,7) \geq 192$ were
computed by Bert Dobbelaere, who kindly allowed us to quote them here.

\section{Split occurrences of $t$-overlaps}

We now can prove Theorem~\ref{main}, the main result of this paper.   We do so by finding an explicit bound on the length of the longest word avoiding split overlaps.
Let $n \geq 0$ and $k \geq 1$ be fixed integers.
Define $S(k,t)$ (resp., $R(k,t)$) to be the length of the longest word
over a $k$-letter alphabet containing no occurrences
of split $t$-overlaps (resp., reversed split $t$-overlaps).

\begin{theorem}
We have
\begin{itemize}
\item[(a)] $S(k,t) \leq C(k, C(k,t)+ 1) $;
\item[(b)] $S(k,0) = k$;
\item[(c)] $S(k,1) \leq k^{k+1} + k-1$;
\item[(d)] $S(1,t) = 3t-1$ for $t \geq 1$; 
\end{itemize}
and the same bounds hold for $R(k,t)$.
\end{theorem}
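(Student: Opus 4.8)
The plan is to make part~(a) the engine and to derive (b)--(d) from it or by direct inspection. For~(a) I would prove the contrapositive of the underlying combinatorial fact: if a word $w$ contains two disjoint occurrences of a single word $v$ with $|v|=C(k,t)+1$, then $w$ contains a split $t$-overlap. Write $w=p\,v\,q\,v\,r$ with the two displayed copies disjoint ($q$ possibly empty). Since $|v|=C(k,t)+1>C(k,t)$, the defining property of $C(k,t)$ forces $v$ to contain two disjoint occurrences of some length-$t$ word $u$, say $v[a..a+t-1]=v[c..c+t-1]=u$ with $a<c$, $c\ge a+t$, and $c+t\le|v|$. Put $e=c-a\ (\ge t)$, $h=v[a..c-1]$, and $g=v[a..c+t-1]$; then $|h|=e$, and since the length-$t$ prefix of $h$ equals $u$ we get $g=h\,h'$, where $h'$ is that prefix. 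The two copies of $v$ furnish two disjoint occurrences of $g$, one in each copy. Letting $X$ be the occurrence of $h$ in the first copy, $Z$ the occurrence of $g$ in the second copy, and $Y$ the intervening material, we obtain $XZ=h\cdot hh'=hhh'$, a genuine $t$-overlap with base $h$ (valid because $|h|=e\ge t$), so $XYZ$ is a split $t$-overlap.

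The step I expect to be the main obstacle is exactly this construction. There is no third copy of $v$ available, so the extra length-$t$ tail that distinguishes a $t$-overlap $hhh'$ from a mere square cannot be obtained by repeating $v$; it has to be manufactured from the internal repetition of $u$ inside $v$, which is precisely what the factor $g=hh'$ encodes. Making the roles fit --- $h$ drawn from the first copy, $g$ from the second, with the correct lengths, the disjointness of the two occurrences of $g$ (they lie in different copies of $v$), and a nonnegative gap --- is the delicate part. The reversed statement uses the same two occurrences read the other way, taking $g$ (first copy) as the suffix and $h$ (second copy) as the prefix, so that this time $zx=h\cdot hh'=hhh'$ is the overlap. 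Granting this, (a) is immediate: if $|w|>C(k,C(k,t)+1)$ then, by the defining property of $C$ applied with block length $C(k,t)+1$, some length-$(C(k,t)+1)$ factor occurs twice disjointly, producing a split (resp.\ reversed split) $t$-overlap; hence $S(k,t),R(k,t)\le C(k,C(k,t)+1)$.

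For the remaining parts I would argue directly. In~(b), two occurrences of any letter $a$ already form a split occurrence of the square $aa$ (take $x=z=a$), so a word avoiding split $0$-overlaps has pairwise distinct letters and length at most $k$, while $0\,1\cdots(k-1)$ attains this length and, having all letters distinct, admits no split square at all; the same holds for $R$. In~(c), I would show that for $t=1$ any repeated length-$(k+1)$ factor forces a split $1$-overlap: overlapping occurrences yield an ordinary overlap by Proposition~\ref{prop1} (an empty-gap split $1$-overlap), and disjoint occurrences yield a split $1$-overlap by the construction of~(a), since a length-$(k+1)$ word over $\Sigma_k$ must repeat a letter. Thus all length-$(k+1)$ factors of $w$ are distinct; since moreover $0^{k+1}$ cannot occur (it contains the overlap $000$), at most $k^{k+1}-1$ distinct length-$(k+1)$ factors appear, and counting positions gives $|w|-k\le k^{k+1}-1$, i.e.\ $|w|\le k^{k+1}+k-1$. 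Finally, in~(d), over a unary alphabet a $t$-overlap is exactly $a^{2m+t}$ with $m\ge t$, the shortest being $a^{3t}$, so $a^n$ contains a split $t$-overlap iff $n\ge3t$; hence $S(1,t)=R(1,t)=3t-1$. Throughout, the reversed bounds match because every witness used (a repeated letter, an ordinary overlap, a unary power) is invariant under reversal.
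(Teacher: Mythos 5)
Your proposal is correct, and for the main part (a) it is essentially the paper's argument: the same two-level application of $C$ (two disjoint occurrences of a factor of length $C(k,t)+1$, then two disjoint occurrences of a length-$t$ word inside it), producing in fact literally the same factor of $w$. In the paper's notation ($x = syuyv$, $w = p\,syuyv\,q\,syuyv\,r$) the witness factor is $yuyvqsyuy$; the paper parses it as $(yuy)(vqsy)(uy)$ with repetition $yuy\cdot uy = (yu)^2y$, while your $h = yu$, $g = hh' = yuy$ parses it as $(yu)(yvqs)(yuy)$ with repetition $yu\cdot yuy = (yu)^2y$ --- the same $t$-overlap with the gap inserted one block earlier, and in both versions the gap contains $y$ and so is nonempty for $t \geq 1$. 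Your explicit reversed-split variant (taking $g = hh'$ from the first copy as the $x$-part and $h$ from the second copy as the $z$-part, so $zx = h\cdot hh'$) is exactly the adaptation the paper dismisses with ``exactly the same arguments,'' and your parts (b) and (d) (the latter left to the reader by the paper) are correct, including the convention --- which the paper itself uses in its proof of (c) --- that an ordinary occurrence counts as a split occurrence with empty middle word.

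In part (c) you genuinely improve on the paper. The paper's pigeonhole count is off by one: a word of length $k^{k+1}+k$ has only $k^{k+1}$ occurrences of length-$(k+1)$ factors (not $k^{k+1}+1$), so the paper's argument as written yields only $S(k,1) \leq k^{k+1}+k$. Your refinement --- every length-$(k+1)$ factor occurs at most once, and moreover $0^{k+1}$ cannot occur at all since it contains the overlap $000$ --- gives $|w| - k \leq k^{k+1}-1$ and recovers the stated bound $k^{k+1}+k-1$ exactly. One caveat you should state explicitly: the exclusion of $0^{k+1}$ needs $k \geq 2$ (so that $k+1 \geq 3$), but this restriction is inherent in part (c) itself, which fails at $k=1$ since (d) gives $S(1,1) = 2 > 1^{2}+1-1$.
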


\begin{proof}
We prove the results only for split overlaps; exactly the same arguments can be used for reversed split overlaps.

\medskip

\noindent(a) Let $|w| \geq C(k, C(k,t) + 1) + 1$.  Then
$w$ contains at least two disjoint occurrences
of some factor $x$ of length $C(k,t) + 1$.   Write $w = pxqxr$.  Then 
$x$ itself contains two disjoint
occurrences of some factor $y$ of length $t$.  Write
$x = syuyv$.   Then
$w = p syuyv q syuyv r$.   Now $w$ contains the factor
$yuyvqsyuy$ and so the split $t$-overlap $yuy\cdot uy$.  It
therefore follows that $S(k,t) \leq C(k,C(k,t)+1)$, as desired.

\medskip

\noindent(b) For $t = 0$, we can take $C(k,t) = k$.   For if a word $w$ is of length
at least $k+1$, it must contain two repeated letters, say
$w = xayaz$, and hence the split square $a\cdots a$.

\medskip

\noindent(c) For $t = 1$, we have $C(k,t) \leq k^{k+1} + k-1$.    We can
use the argument in (a), but with a small twist.
Consider the factors of length $k+1$ in a word $w$ of length at
least $k^{k+1} + k$.  There are
at least $k^{k+1} + 1$ of these factors, and by the pigeonhole principle,
some factor $x$ of length $k+1$ appears at least twice in
$w$.
If these two occurrences of $x$ overlap in $w$, we are
already done, because they contain an overlap right there
by Proposition~\ref{prop1}.
Otherwise, write $w = s x t x u$ for some $s, t, u$.
Now $x$ is of length $k+1$, so again by the pigeonhole principle,
some letter $a$ is repeated in
$x$.  Write $x = paqar$ for some words $p, q, r$.
Putting this all together, we have $w = s paqar t paqar u$.
Consider the factor $aqartpaqa$.    It has the split
$1$-overlap $aq\cdots aqa$.

\medskip

\noindent(d) Easy.  Left to the reader.
\end{proof}

Table~\ref{tab2} gives the values of $S(k,t)$ we have computed
by brute force.
\begin{table}[H]
\begin{center}
\begin{tabular}{|c|ccccc|}
\hline
\diagbox{$k$}{$t$} & 0 & 1 & 2 & 3 & 4  \\
\hline
1 & 1 & 2 & 5 & 8 & 11 \\
2 & 2 & 4 & 12 & 47 & \\
3 & 3 & 9 & $\geq97$ & &   \\
4 & 4 & 31 & & & \\
5 & 5 & $\geq 100$ &&& \\
\hline
\end{tabular}
\end{center}
\caption{Values of $S(k,t)$}
\label{tab2}
\end{table}

Words achieving the nontrivial bounds in Table~\ref{tab2} are
given below:
\begin{table}[H]
\begin{center}
\begin{tabular}{c|c|l}
$k$ & $t$  & Lexicographically least word achieving $S(k,t)$ \\
\hline
2 & 1 & $0011$ \\
2 & 2 & $000110100111$ \\
2 & 3 & $00111010100001010011101000011111000011010001110$\\ 
3 & 1 & $012021012$ \\
4 & 1 & $0120321301231013210203123021031$
\end{tabular}
\end{center}
\caption{Lexicographically least word achieving the bounds in Table~\ref{tab2}}
\end{table}

Table~\ref{tab3} gives the values of $R(k,t)$ we have 
computed by brute force.
\begin{table}[H]
\begin{center}
\begin{tabular}{|c|ccccc|}
\hline
\diagbox{$k$}{$t$} & 0 & 1 & 2 & 3 & 4  \\
\hline
1 & 1 & 2 & 5 & 8 & 11 \\
2 & 2 & 4 & 15 & 46 & $\geq 213$ \\
3 & 3 & 9 & $\geq 110$ &  &\\
4 & 4 & 30 & & & \\
5 & 5 & $\geq 122$ &&& \\
\hline
\end{tabular}
\end{center}
\caption{Optimal values of $R(k,t)$}
\label{tab3}
\end{table}

Words achieving the nontrivial bounds in Table~\ref{tab3} are
given below:
\begin{table}[H]
\begin{center}
\begin{tabular}{c|c|l}
$k$ & $t$  & Lexicographically least word achieving $R(k,t)$ \\
\hline
2 & 1 & $0011$ \\
2 & 2 & $010001100111001$ \\
2 & 3 & $0010100110100011111000111010000011101010001100$\\
3 & 1 & $012010210$ \\
4 & 1 & $012031231032021030231321023013$
\end{tabular}
\end{center}
\caption{Lexicographically least word achieving the bounds in Table~\ref{tab2}}
\end{table}

\section{Remarks}

We currently do not know whether the upper bound in Theorem~\ref{thm3} is tight, or asymptotically tight, except when $n \leq 3$.     Improvement of this bound, or construction of examples nearly matching the bound, would be of interest.

It is a challenging computational problem to compute more values
of $C(k,n)$, $S(k,t)$, and $R(k,t)$, which we leave to the reader.

We are grateful to Farbod Yadegarian, who computed lower bounds on
$R(5,1)$, $R(3,2)$, and $R(2,4)$.

\end{document}